\newtheorem{theorem}{Theorem}
\newtheorem{definition}[theorem]{Definition}
\newtheorem{lemma}[theorem]{Lemma}
\newtheorem{proposition}[theorem]{Proposition}
\newenvironment{proof}[1][Proof]{\noindent\textit{#1.} }{\ $\Box$}
\DeclareMathOperator{\brank}{rk_{2}}
\newcommand{\ket}[1]{\vert#1\rangle}
\begin{document}

\title{Weight of quadratic forms and graph states}
\author{Alessandro Cosentino}
\affiliation{Dipartimento di Informatica, Universit\`{a} degli Studi di Pisa, 56127 Pisa,
Italy}
\email{cosenal@gmail.com}
\author{Simone Severini}
\affiliation{Department of Physics \& Astronomy, University College London, WC1E 6BT London, United Kingdom}
\date{\today}

\begin{abstract}
We prove a connection between Schmidt-rank and weight of quadratic forms. This provides a new tool for the classification of graph states based on entanglement. Our main tool arises from a reformulation of previously known results concerning the weight of quadratic forms in terms of graph states properties. As a byproduct, we obtain a straightforward characterization of the weight of functions associated with pivot-minor of bipartite graphs.
\end{abstract}

\pacs{}
\maketitle

\section{Introduction}

Graph states form a particularly interesting class of multipartite entangled
states associated with combinatorial graphs (see, \emph{e.g.}, \cite%
{Hein2007}), and have applications in diverse areas of quantum information
processing, such as quantum error correction \cite{Schlingemann2007} and the
one-way model \cite{oneway}. A graph state can be represented as a
homogenous quadratic Boolean function, whose monomials correspond to the
edges of the graph. Studying properties of quadratic
Boolean functions has the potential to discover new interesting properties
of graph states and to recast the known properties in a different language.

An important problem concerning graph states is quantifying their
entanglement. The typical approach to this task is to determine equivalence
classes of graph states under local operations (see \cite{Ji2007}, for a
recent work approaching this topic). Besides this, \emph{ad hoc }measures
for quantifying entanglement in graph states have been defined \cite%
{Fattal2007, Eisert2001}.
In order to disprove an important conjecture in the context of equivalence of graph 
states under local operations, the conjecture has been reduced to a simpler problem regarding quadratic Boolean functions \cite{Gross07, Ji2007}.

The correspondence between graph states and Boolean functions has been also studied 
in the context of quantum codes.
The codes with best error-correction capability can be interpreted as describing states with high entanglement. 
Moreover, cryptographic properties of Boolean functions, such as spectral properties,
can be interpreted as entanglement measures of the corresponding quantum states \cite{Parker01, Riera05I, Riera05II}.
Connections between graph states, codes and Boolean functions are 
clearly explained in \cite{DanielsenM}.

The aim of this paper is to take the first steps toward a theory that
relates the weight of quadratic forms with entanglement properties of graph
states. For graph states represented by bipartite graphs, we shall prove a
direct relation between Schmidt rank and the weight of corresponding
quadratic forms. As far as we know, the weight of certain types of
homogeneous quadratic Boolean functions has not been researched thoroughly.
Since this weight corresponds to the number of minus signs in the basis
state decomposition of a graph state, we will call it the \emph{minus sign
number}. Such a number counts the induced subgraphs with an odd number of
edges in the graph associated to the state.

Note that, in a related context, counting the weights of a set of Boolean functions has been proved to be equivalent to determining the output of a quantum computation \cite{Nielsen04}.
Further investigation on the weight of Boolean functions associated with quantum states could then also lead to results in quantum computational complexity.

The remainder of this article is organized as follows. In Sec. \ref%
{sec:basic}, we will recall the basic definitions of graph state and of
quadratic forms. We will also present a polynomial time algorithm,
immediately derived from \cite{ehrenfeucht90computational}, to efficiently
calculate, given a graph, the weight of the corresponding quadratic form. 
We also show how the weight of a Boolean function 
relates to the spectral coefficients of the function.
In Sec. \ref{sec:brank}, we prove a direct relation between Schmidt rank and
weight of quadratic forms for bipartite graphs. Then, in Sec. \ref%
{sec:properties}, we list some properties of the weight of quadratic forms
related to generic graphs. Sec. \ref{sec:examples} contains explicit
expressions to calculate the weight of quadratic forms associated with some
interesting classes of graphs. Finally, in Sec. \ref{sec:conclusion}, we
conclude with a brief discussion of the results and some open problems.

\section{Basic concepts}

\label{sec:basic} It is useful to fix the notation and recall some of the
definitions that we will use in the remainder of the paper.

A \emph{graph} is a pair $G=(V,E)$ whose elements are two sets, $V$ and $%
E\subset \lbrack V]^{2}$. The elements of $V$ are called \emph{vertices}.
The elements of $E$ are unordered pairs of different vertices and are called
\emph{edges}. The \emph{order} and the \emph{size} of a graph are
respectively the number of its vertices, \emph{i.e.} $|V(G)|$, and the
number of its edges,\emph{\ i.e.} $|E(G)|$. The \emph{adjacency matrix} of a
graph $G$ is the matrix $A(G)$ such that $A(G)_{i,j}=1$ if $\{i,j\}\in E(G)$
and $A(G)_{i,j}=0$ if $\{i,j\}\notin E(G)$. A \emph{vertex cover} of a graph
$G$ is a set $S\subseteq V(G)$ such that every edge of $G$ has at least one
of its endpoints in $S$. The \emph{vertex covering number} of a graph $G$ is
the size of its minimal vertex cover. Given a graph $G$, the \emph{%
neighborhood} $N_{i}\subset V(G)$ of a vertex $i\in V(G)$ is the set of all
vertices that are adjacent to $i$. Unless otherwise specified, all our
algebraic operations are performed over $\mathbb{F}_{2}$, the finite field
of order two.

An $n$-qubit \emph{graph state} $|G\rangle $ is a pure state associated with
a graph $G$. Graph states are prepared by assigning each vertex to a qubit
in the state $|+\rangle =(|0\rangle +|1\rangle )/\sqrt{2}$ and applying, for
each edge between two vertices $i$ and $j$, the \emph{Controlled-}$\mathsf{Z}
$ gate on qubits $i$ and $j$:
\begin{equation*}
|G\rangle =\prod_{\{i,j\}\in E(G)}C_{Z_{i,j}}|+\rangle ^{V},
\end{equation*}%
where $C_{Z}=|00\rangle \langle 00|+|01\rangle \langle 01|+|10\rangle
\langle 10|-|11\rangle \langle 11|$.

There is a common way to define graph states in terms of Boolean functions
and quadratic forms. A \emph{Boolean quadratic form} in $n$ variables is a
homogeneous polynomial in $\mathbb{F}_{2}[x_{1},\ldots ,x_{n}]$ of degree $2$%
, or the zero polynomial. We denote by $\mathbf{x}$ the column vector of
variables $x_{1},\ldots ,x_{n}$. Given a graph $G=(V,E)$ of order $n$, we
associate a variable $x_{i}$ with each vertex of the graph. An instance of
the vector $\mathbf{x}$ characterizes a subset $S_{x}$ of $V(G)$, such that $%
i\in S_{x}$ if $x_{i}=1$ and $i\notin S_{x}$, otherwise. Let us associate
with a graph $G$ the following quadratic form:
\begin{equation}
f_{G}(\mathbf{x})=\sum_{\substack{ 1\leq i,j\leq n \\ i<j}}%
A(G)_{i,j}x_{i}x_{j}=\mathbf{x^{T}}U_{G}\mathbf{x},  \label{eq:frommatrix}
\end{equation}%
where $A(G)$ is the adjacency matrix of the graph and $U_{G}$ is its upper
triangular part. A graph $H$ is a \emph{subgraph} of a graph $G$ if $V\left(
H\right) \subseteq V\left( G\right) $, $E\left( H\right) \subseteq E\left(
G\right) $ and every edge in $E\left( H\right) $ has both its vertices in $%
V\left( H\right) $. A subgraph $H$ of $G$ is an \emph{induced} \emph{subgraph%
} if every edge in $E\left( G\right) $, having both vertices in $V\left(
H\right) $, is also in $E\left( H\right) $. Let $G[S_{x}]$ denote the
subgraph induced by the subset $S_{x}\subseteq V(G)$ characterized by $%
\mathbf{x}$. The function $f_{G}(\mathbf{x})=1$ if and only if the size of $%
G[S_{x}]$ is odd.

By using the function $f_{G}$ given in Eq. (\ref{eq:frommatrix}), we can
define the graph state associated with the graph $G$ as a superposition over
all elements of the computational basis for the $n$-qubit space:
\begin{equation}
|G\rangle =\frac{1}{\sqrt{2^{n}}}\sum_{x\in
\{0,1\}^{n}}(-1)^{f_{G}(x)}|x\rangle \,.  \label{eq:deff}
\end{equation}%
The \emph{weight} of a Boolean function $f$ is defined and denoted by $%
|f|=|\{\mathbf{x}\in \{0,1\}^{n}\mid f(\mathbf{x})=1\}|$.

In this paper, we investigate the weight of Boolean quadratic forms with
regards to some properties of graph states. The weight of a quadratic form $%
f_{G}$, associated with a graph $G$, is equal to the number of induced
subgraphs of $G$ with an odd number of edges. In a graph state $|G\rangle $,
the weight of $f_{G}$ corresponds to the number of amplitudes with negative
sign associated with the computational basis vectors. We denote this number
by $w(G)$, that is $w(G)=|f_{G}|$, and we call it \emph{Minus-Signs number
of $G$} (for short \emph{MS-number}). We will use indifferently the
expressions \emph{MS-number of a graph $G$} or \emph{MS-number of a graph
state $|G\rangle $}. This is not confusing since $|G\rangle =|G^{\prime
}\rangle $ if and only if $G=G^{\prime }$. Moreover, we will use the
notation $\bar{w}(G)$ to indicate the number of \emph{plus signs} in a graph
states $|G\rangle $. This is the number of induced subgraphs with \emph{even}
size in $G$: $\bar{w}(G)=2^{|V(G)|}-w(G)$. For instance, the graph state
associated with the complete graph $K_{3}$ is $|K_{3}\rangle =\frac{1}{2%
\sqrt{2}}(|000\rangle +|001\rangle +|010\rangle -|011\rangle +|100\rangle
-|101\rangle -|110\rangle -|111\rangle )$. The MS-number of $K_{3}$ is then $%
w(K_{3})=4$.

The weight of a Boolean function $f:\mathbb{F}_{2}^{n} \rightarrow \mathbb{F}_{2}$ 
relates to the zero-order coefficient of the spectrum of $f$ w.r.t. the \textit{Walsh-Hadamard Transform (WHT)}
\footnote{Walsh-Hadamard Transform is also called \textit{Abstract Fourier Transform}.}.
Let $\mathbf{f}$ be the $2^n$ length vector representation of $f$, i.e.,
$\mathbf{f}_{i} = f(i)$, $0 \leq i \leq 2^n-1$ and $H_n$ be the unitary matrix $H^{\otimes n}$, where $H = \frac{1}{\sqrt{2}} 
\left( \begin{array}{rr}
1 & 1 \\ 1 & -1
\end{array} \right)$.
The vector of real spectral coefficient of $f$ w.r.t. WHT 
is $\mathbf{f}^{*} = H_n \mathbf{f}$.
The weight of $f$ is then $|f| = \mathbf{f}_{0}^{*} \sqrt{2^{n}}$.

Ehrenfeucht and Karpinski \cite{ehrenfeucht90computational} designed an
algorithm, working in $\mathcal{O}(n^{3})$, for computing the weight of any
polynomial in $\mathbb{F}_{2}$ of degree at most $2$ with $n$ variables.
Given an arbitrary polynomial $f\in \mathbb{F}_{2}[x_{1},\ldots ,x_{n}]$ of
degree at most $2$, this algorithm computes a nonsingular $m\times n$ matrix
$T$ and a vector $\mathbf{c}$ of length $m$, such that%
\begin{equation}
g(T\mathbf{x}+\mathbf{c})=f(\mathbf{x}).  \label{eq:equivalentg}
\end{equation}%
The form of $g$ can be either

\begin{description}
\item[(Type I)]
\begin{equation*}
g=y_{1}+ y_{2}y_{3}+ y_{4}y_{5}+ \ldots + y_{m-1}y_{m}+ z;
\end{equation*}
\end{description}

or

\begin{description}
\item[(Type II)]
\begin{equation*}
g=y_{1}y_{2}+ y_{3}y_{4}+ \ldots + y_{m-1}y_{m}+ z,
\end{equation*}%
where $z\in \{0,1\}$.
\end{description}

In matrix form, we can write
\begin{equation}
g(\mathbf{y})=\mathbf{y}^{T}R_{m}\mathbf{y}+z,  \label{eq:matrixformg}
\end{equation}%
where $R_{m}$ is an $m\times m$ block diagonal matrix with the first block
being the identity matrix $I_{1}$ if and only if $g$ is of Type I and the
rest being $\lfloor m/2\rfloor $ blocks of the form $%
\begin{pmatrix}
0 & 1 \\
0 & 0%
\end{pmatrix}%
$. We call \emph{readonce form} a Boolean function of the same form of $g$,
since every variable $x_{i}$ appears exactly once in $g$. Hence, the
Ehrenfeucht-Karpinski algorithm transforms, by a nonsingular linear
substitution of variables, an arbitrary form $f\in \mathbb{F}%
_{2}[x_{1},\ldots ,x_{n}]$ into a readonce form $g\in \mathbb{F}%
_{2}[x_{1},\ldots ,x_{m}]$ ($m\leq n$), such that Eq. (\ref{eq:equivalentg})
holds.

\begin{definition}
\label{def:equivalentg} If $f$ is transformed into $g$ by the
Ehrenfeucht-Karpinski algorithm, we say that $g$ is the \emph{readonce form
equivalent} to $f$.
\end{definition}

Lidl and Niederreiter \cite{lidl} give us a closed formula to count the
weight of a readonce form:
\begin{equation}
|g|:=%
\begin{cases}
2^{m-1}, & \text{if $g$ is of Type I;} \\
2^{m-1}-(-1)^{z}2^{(m-2)/2}, & \text{if $g$ is of Type II.}%
\end{cases}
\label{cardg}
\end{equation}

The weight of an arbitrary function $f\in \mathbb{F}_{2}[x_{1},\ldots ,x_{n}]
$ of degree at most 2 is then
\begin{equation}
|f|=|g|2^{n-m}.  \label{cardf}
\end{equation}%
We can use this algorithm to calculate the MS-number of a graph. We are then
interested only in the weight of quadratic forms, which constitute a
subclass of the set of all functions to which we can apply the algorithm. If
we exclude the zero polynomial, the readonce forms equivalent to quadratic
forms have at least $2$ variables, $m\geq 2$.

For example, let us consider a graph $H$ with adjacency matrix%
\begin{equation*}
A(H)=%
\begin{bmatrix}
0 & 0 & 1 & 1 \\
0 & 0 & 1 & 1 \\
1 & 1 & 0 & 1 \\
1 & 1 & 1 & 0%
\end{bmatrix}%
.
\end{equation*}%
This is the complete graph on 4 vertices with a single missing edge. The
corresponding quadratic form is $%
f_{H}=x_{1}x_{3}+x_{1}x_{4}+x_{2}x_{3}+x_{2}x_{4}+x_{3}x_{4}.$ The readonce
form equivalent to $f_{H}(\mathbf{x})$ is $g_{H}(\mathbf{y})=y_{1}+y_{2}y_{3}
$, with the transformation
\begin{equation*}
\mathbf{y}=%
\begin{bmatrix}
1 & 1 & 0 & 0 \\
1 & 1 & 1 & 0 \\
1 & 1 & 0 & 1%
\end{bmatrix}%
\mathbf{x}.
\end{equation*}%
Since $m=3$, the MS-number of $H$ is $w(H)=8$. An implementation in \textsc{%
Octave} \cite{octave} of the algorithm to calculate the MS-number of a graph
is available 
at \url{http://arxiv.org/e-print/0906.2488} as a {\em tar} file.

In this paper we shall only consider connected graphs. For the disjoint
union of graphs we can in fact observe the following. Given two graphs $%
G_{1}=(V_{1},E_{1})$ and $G_{2}=(V_{2},E_{2})$ with disjoint vertex sets and
edge sets, their union $G=(V,E)=G_{1}\cup G_{2}$ is the graph with $%
V=V_{1}\cup V_{2}$ and $E=E_{1}\cup E_{2}$. Isolated vertices can be seen as
a particular case of graph union where $G_{1}$ or $G_{2}$ is an empty graph.
Obviously, the graph state corresponding to a disjoint union of graphs is
the tensor product of the states corresponding to the connected components.
If $G=\bar{K}_{n}\cup G_{2}$ is the union of an empty graph on $n$ nodes and
a graph $G_{2}$, then $w(G)=2^{n}w(G_{2})$. In general, for $G=G_{1}\cup
G_{2}$, we have $w(G)=w(G_{1})\bar{w}(G_{2})+\bar{w}(G_{1})w(G_{2})$. This
result can be recursively extended to the graph $G$, union of $n$ graphs $%
G_{i}$
\begin{equation*}
w(G)=w(\bigcup_{i=1}^{n}G_{i})=w(G_{1})\bar{w}(\bigcup_{i=2}^{n}G_{i})+\bar{w%
}(G_{1})w(\bigcup_{i=2}^{n}G_{i}).
\end{equation*}

\subsection{Readonce graph states}

The definition of \emph{readonce forms} leads us to introduce a special class of 
graph states, which can be written as
$$
\ket{G_{(m,z)}} = \frac{1}{\sqrt{2^{m}}}\sum_{y\in
\{0,1\}^{m}}(-1)^{g(y)}\ket{y},
$$
where $g$ is a readonce form in $m$ variables. 
We denote these states as \emph{readonce graph states}.
Given a number of qubits, we have only two different readonce graph states, 
since readonce forms depend only on two parameters. 
Also notice that the graphical structures beneath these states are perfect matchings.
Let us now remark that, with a graph state on $n$ qubits, we can associate a state of the following form:
$$
\ket{G_{(m,z)}} \otimes \ket{+}^{n-m},
$$
that is, a product of a readonce graph state on $m$ qubits and 
$n-m$ qubits in position $\ket{+}$.

For example, let us consider the complete graph $K_{3}$.
Its corresponding graph state can be prepared by the circuit depicted on the left of Fig. \ref{fig:readonce} (vertical segments are \emph{Controlled-}$\mathsf{Z}$ 
gates on two qubits).
The graph state $\ket{K_{3}}$ has the same MS-number of the readonce graph prepared
by the circuit shown in Fig. \ref{fig:readonce} (right).

\begin{figure}[ht]
\begin{minipage}[b]{0.4\linewidth}
\centering
\includegraphics[scale=1]{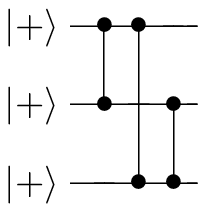}
\end{minipage}
\hspace{0.3cm}
\begin{minipage}[b]{0.4\linewidth}
\centering
\includegraphics[scale=1]{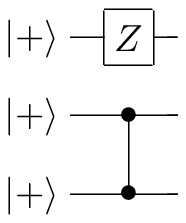}
\end{minipage}
\caption{Preparation of $\ket{K_{3}}$ (left) and of the readonce graph state associated to $\ket{K_{3}}$ (right).}
\label{fig:readonce}
\end{figure}

An important feature of readonce graph states is that all the gates act 
on the qubits in parallel, so their preparation cost is the lowest possible.

\section{Binary rank}

\label{sec:brank}

The \emph{binary rank} of a graph $G$, denoted by $\brank(G)$, is the rank
of its adjacency matrix calculated over $\mathbb{F}_{2}$.

\begin{lemma}[Theorem 8.10.1 in \protect\cite{agt}]
\label{lemma:brank}Let $A$ be an $n\times n$ adjacency matrix with binary
rank $m$. Then $m$ is even and there is an $m\times n$-matrix $C$ of rank $m$
such that $A=C^{T}N_{m}C$, where $N_{m}$ is a block diagonal matrix with $m/2
$ block $%
\begin{pmatrix}
0 & 1 \\
1 & 0%
\end{pmatrix}%
$.
\end{lemma}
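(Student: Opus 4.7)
The plan is to reduce the lemma to the classical symplectic normal form theorem for alternating bilinear forms over $\mathbb{F}_2$. The key observation is that an adjacency matrix $A$ of a simple graph is symmetric with $A_{ii}=0$, so the form $\langle x,y\rangle = x^T A y$ on $\mathbb{F}_2^n$ satisfies $\langle x,x\rangle = 0$ for every $x$, i.e., it is alternating. Once alternating, the symplectic basis theorem delivers the hyperbolic block decomposition, and rank invariance under congruence forces the rank to be even.

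Concretely, I would proceed by induction on $n$ via a symplectic Gram--Schmidt pivoting. If $A=0$, take $m=0$ and $C$ empty. Otherwise pick $i<j$ with $A_{ij}=1$ and, after a permutation congruence (itself an invertible congruence), assume $A_{12}=1$. For each $k\geq 3$, replace $e_k$ by $e_k' = e_k + A_{2k}e_1 + A_{1k}e_2$; a direct calculation using $A_{11}=A_{22}=0$ and $A_{12}=1$ gives $\langle e_k',e_1\rangle = \langle e_k',e_2\rangle = 0$. Encoded as an invertible matrix $P$, this change of basis yields
\begin{equation*}
P^T A P = \begin{pmatrix} N_2 & 0 \\ 0 & A' \end{pmatrix},
\end{equation*}
where $A'$ is again symmetric with zero diagonal, hence the adjacency matrix of a graph on $n-2$ vertices whose binary rank is $m-2$.

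The induction hypothesis then produces a $(m-2)\times(n-2)$ matrix $C'$ of rank $m-2$ with $A' = (C')^T N_{m-2} C'$; in particular $m-2$, and therefore $m$, is even. Setting
\begin{equation*}
C = \begin{pmatrix} I_2 & 0 \\ 0 & C' \end{pmatrix} P^{-1}
\end{equation*}
produces an $m\times n$ matrix of rank $m$ (since $P^{-1}$ is invertible and the left factor has full row rank) that satisfies $A = C^T N_m C$, closing the induction. The only mildly delicate point, and the main thing to get right, is the pivoting identity above: in characteristic $2$ one must exploit $A_{ii}=0$ explicitly, because symmetric matrices are not automatically alternating, and without this the complement block $A'$ need not have zero diagonal and the induction would not close.
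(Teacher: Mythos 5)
Your proof is correct. Note that the paper itself offers no proof of this statement: it is imported verbatim as Theorem 8.10.1 of Godsil and Royle \cite{agt}, so there is no in-paper argument to compare against. What you have written is essentially the standard proof of that theorem, namely the symplectic (hyperbolic-plane) reduction of an alternating bilinear form over $\mathbb{F}_2$. You correctly isolate the one point that actually matters in characteristic $2$: a symmetric matrix is not automatically alternating, and it is the zero diagonal of an adjacency matrix that gives $x^T A x = \sum_i A_{ii} x_i^2 = 0$, which in turn guarantees that the complementary block $A'$ after splitting off a hyperbolic pair again has zero diagonal, so the induction closes. The pivoting identities $\langle e_k', e_1\rangle = A_{k1} + A_{1k}\langle e_2,e_1\rangle = 0$ and $\langle e_k', e_2\rangle = A_{k2} + A_{2k}\langle e_1,e_2\rangle = 0$ check out, the rank bookkeeping ($\operatorname{rank} A' = m-2$, and $C$ of full row rank $m$ as a product of a full-row-rank block matrix with an invertible $P^{-1}$) is sound, and the base case $A = 0$, $m = 0$ anchors the parity claim. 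This is a clean, self-contained replacement for the citation; the only thing the citation buys over your argument is brevity.
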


The following theorem relates the MS-number of a graph $G$ to its binary rank:

\begin{theorem}
\label{th:brank} Let $f\in \mathbb{F}_{2}[x_{1},\ldots ,x_{n}]$ be the
quadratic form associated with a graph $G$. Let $g\in \mathbb{F}%
_{2}[y_{1},\ldots ,y_{m}]$ be the readonce form equivalent to $f$. If $g$ is
of Type I then $m=\brank(G)+1$. If $g$ is of Type II, $m=\brank(G)$.
\end{theorem}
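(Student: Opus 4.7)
The idea is to polarise both $f$ and $g$ into symmetric bilinear forms over $\mathbb{F}_{2}$ and then read off their ranks. For any Boolean polynomial $h$ of degree at most two, define
\begin{equation*}
B_{h}(\mathbf{u},\mathbf{v})=h(\mathbf{u}+\mathbf{v})+h(\mathbf{u})+h(\mathbf{v})+h(0).
\end{equation*}
A direct expansion shows that $B_{h}$ is a symmetric bilinear form that depends only on the purely quadratic part of $h$; in particular it is unchanged by any affine translation $h(\mathbf{u})\mapsto h(\mathbf{u}+\mathbf{c})$. Applied to $f_{G}(\mathbf{x})=\mathbf{x}^{T}U_{G}\mathbf{x}$ and to $g(\mathbf{y})=\mathbf{y}^{T}R_{m}\mathbf{y}+z$ (as in Eq.~(\ref{eq:matrixformg})), this yields $B_{f_{G}}(\mathbf{x}_{1},\mathbf{x}_{2})=\mathbf{x}_{1}^{T}A(G)\mathbf{x}_{2}$ (since $U_{G}+U_{G}^{T}=A(G)$) and $B_{g}(\mathbf{u}_{1},\mathbf{u}_{2})=\mathbf{u}_{1}^{T}(R_{m}+R_{m}^{T})\mathbf{u}_{2}$.

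The next step is to compute $R_{m}+R_{m}^{T}$ explicitly. In Type II each block $\begin{pmatrix}0 & 1\\0 & 0\end{pmatrix}$ symmetrises to $\begin{pmatrix}0 & 1\\1 & 0\end{pmatrix}$, so $R_{m}+R_{m}^{T}$ is a block-diagonal sum of $m/2$ rank-$2$ blocks and therefore has rank $m$. In Type I the additional leading $1\times 1$ block $[1]$ of $R_{m}$ contributes $[1+1]=[0]$ after symmetrisation, dropping the rank to $m-1$. The subtle point here is that the linear monomial $y_{1}$ of $g$ is encoded as a diagonal entry of $R_{m}$ via $y_{1}^{2}=y_{1}$, and polarisation kills exactly this diagonal; this cancellation is what forces the ``$+1$'' shift between the two types in the statement of the theorem.

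Finally, applying $B$ to both sides of $g(T\mathbf{x}+\mathbf{c})=f(\mathbf{x})$ and using translation invariance in $\mathbf{c}$, one obtains
\begin{equation*}
\mathbf{x}_{1}^{T}A(G)\mathbf{x}_{2}=B_{g}(T\mathbf{x}_{1},T\mathbf{x}_{2})=\mathbf{x}_{1}^{T}T^{T}(R_{m}+R_{m}^{T})T\mathbf{x}_{2}
\end{equation*}
for all $\mathbf{x}_{1},\mathbf{x}_{2}\in\mathbb{F}_{2}^{n}$, hence the matrix identity $A(G)=T^{T}(R_{m}+R_{m}^{T})T$ over $\mathbb{F}_{2}$. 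Because $T$ is the $m\times n$ matrix of full row rank $m$ produced by the Ehrenfeucht--Karpinski algorithm, $T$ is surjective and $T^{T}$ is injective, so the operation $M\mapsto T^{T}MT$ preserves rank and $\brank(G)=\mathrm{rank}(A(G))=\mathrm{rank}(R_{m}+R_{m}^{T})$. Substituting the two values from the previous paragraph gives $m=\brank(G)+1$ in Type I and $m=\brank(G)$ in Type II. As a sanity check, the parities are compatible with Lemma~\ref{lemma:brank}: Type I has $m$ odd and Type II has $m$ even, so $\brank(G)$ is even in both cases. The only place I expect to have to be careful is the translation step, where one must explicitly verify that shifting the argument of $g$ by $\mathbf{c}$ does not alter the polarised form.
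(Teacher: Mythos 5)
Your proof is correct, and it arrives at the paper's central identity $A(G)=T^{T}(R_{m}+R_{m}^{T})T$ by a somewhat different and arguably cleaner route. The paper obtains that identity by asserting the matrix equation $U_{G}=T^{T}R_{m}T$ directly from the equality of the quadratic polynomials and then symmetrising; strictly speaking two matrices can represent the same quadratic polynomial over $\mathbb{F}_{2}$ without being equal, so your polarisation operator $B_{h}(\mathbf{u},\mathbf{v})=h(\mathbf{u}+\mathbf{v})+h(\mathbf{u})+h(\mathbf{v})+h(0)$, which manifestly depends only on $M+M^{T}$ and is insensitive to the affine shift $\mathbf{c}$, makes the passage from polynomial identity to matrix identity watertight where the paper is slightly informal. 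The arguments genuinely diverge in how the rank is then extracted: the paper matches the decomposition against Lemma~\ref{lemma:brank}, handling Type~I by deleting the zero first row and column of $R_{m}+R_{m}^{T}$ so as to land exactly in the form $C^{T}N_{m-1}C$, whereas you compute $\mathrm{rank}(R_{m}+R_{m}^{T})$ directly ($m$ for Type~II, $m-1$ for Type~I, since the leading block $[1]$ symmetrises to $[0]$) and note that $M\mapsto T^{T}MT$ preserves rank because the Ehrenfeucht--Karpinski matrix $T$ has full row rank, so $T$ is surjective and $T^{T}$ injective. Your version is self-contained and dispenses with the cited lemma; the paper's is shorter given that it has already stated Lemma~\ref{lemma:brank} for other purposes. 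Your closing parity check against that lemma is a nice touch and is consistent with both computations.
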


\begin{proof}
From Eqs. (\ref{eq:frommatrix}), (\ref{eq:equivalentg}) and (\ref%
{eq:matrixformg}), we can write
\begin{equation*}
f_{G}(\mathbf{x})=\mathbf{x^{T}}U_{G}\mathbf{x}=g(T\mathbf{x}+\mathbf{c})=(T%
\mathbf{x}+\mathbf{c})^{T}R_{m}(T\mathbf{x}+\mathbf{c})+z,
\end{equation*}%
where $g$ is the readonce form equivalent to $f$ according to Definition \ref%
{def:equivalentg}. Since $f_{G}$ contains only quadratic terms, $%
U_{G}=T^{T}R_{m}T$. The adjacency matrix of $G$ can be then decomposed as
follows:
\begin{equation*}
A(G)=U_{G}+U_{G}^{T}=T^{T}(R_{m}+R_{m}^{T})T.
\end{equation*}%
If $g$ of Type II, this is the same decomposition of Lemma \ref{lemma:brank}%
, where $N_{m}=R_{m}+R_{m}^{T}$ and $C=T$. If $g$ of Type I, we consider the
submatrix $R_{s}$ obtained from $R_{m}+R_{m}^{T}$ by deleting the first row
and the first column. We also consider the matrix $T_{s}$ obtained from $T$
by deleting the first row. Since we deleted only zero vectors from $%
R_{m}+R_{m}^{T}$, we obtain $A(G)=T_{s}^{T}R_{s}T_{s}$. This is again the
decomposition of Lemma \ref{lemma:brank}, where $N_{m-1}=R_{s}$ and $C=T_{s}$%
. Therefore, $A(G)$ has binary rank $m-1$ if $g$ is of Type I and binary
rank $m$ if $g$ is of Type II.
\end{proof}

For instance, the graph $H$ in the above example has binary rank $\brank(H)=2
$ and its corresponding quadratic form is equivalent to the readonce form $%
g_{H}=y_{1}+y_{2}y_{3}$.

The binary rank of a graph is connected to bent properties of 
its corresponding Boolean function.
\begin{definition}
A Boolean function $f$ with even number of variables is \emph{bent} 
if its WHT spectrum is such that
\begin{itemize}
 \item $\mathbf{f}_{0}^{*} = 2^{(n-2)/2} \pm 2^{-1}$.
 \item $|\mathbf{f}_{i}^{*}| = 2^{-1}$ for any $i \neq 0$.
\end{itemize}
\end{definition}
Bent functions are important for cryptographic applications, since they have maximum distance from linear functions.

\begin{theorem}[Riera and Parker \cite{Riera05I}]
A graph $G$ has maximum binary rank if and only if 
its associated quadratic form $f_G$ is bent.
\end{theorem}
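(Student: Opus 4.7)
The plan is to reduce bent-ness of $f_G$ to the non-degeneracy of the symmetric $\mathbb{F}_{2}$-bilinear form $B(\mathbf{x},\mathbf{y}) := f_G(\mathbf{x}+\mathbf{y})+f_G(\mathbf{x})+f_G(\mathbf{y}) = \mathbf{x}^{T} A(G)\mathbf{y}$ attached to $f_G$. Since $\brank(G)$ is by definition the rank of $A(G)$ over $\mathbb{F}_{2}$ and Lemma \ref{lemma:brank} forces this rank to be even, the maximum $\brank(G)=n$ is attainable only when $n$ is even (which matches the parity prerequisite in the definition of bent) and is exactly the statement that $A(G)$ is invertible.

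The computational heart of the argument is the standard doubling identity for the Walsh-Hadamard transform applied to the $\pm 1$-valued signal $F(\mathbf{x}):=(-1)^{f_G(\mathbf{x})}$. Writing $\widehat{F}(\mathbf{y}):=\sum_{\mathbf{x}}(-1)^{\mathbf{y}\cdot \mathbf{x}}F(\mathbf{x})$, expanding $|\widehat{F}(\mathbf{y})|^{2}$ as a double sum, reparametrising by $\mathbf{u}:=\mathbf{x}'$ and $\mathbf{w}:=\mathbf{x}+\mathbf{x}'$, and using the homogeneous-quadratic identity $f_G(\mathbf{u}+\mathbf{w})+f_G(\mathbf{u}) = f_G(\mathbf{w}) + \mathbf{u}^{T} A(G)\mathbf{w}$, one arrives at
\begin{equation*}
|\widehat{F}(\mathbf{y})|^{2} = 2^{n} \sum_{\mathbf{w}\in\ker A(G)} (-1)^{f_G(\mathbf{w})+\mathbf{y}\cdot \mathbf{w}}.
\end{equation*}
Translating this back to the $\{0,1\}$-valued WHT through $(-1)^{f_G} = 1-2 f_G$ shows that the two bent conditions on the spectrum $\mathbf{f}^{*}$ together are equivalent to $|\widehat{F}(\mathbf{y})|^{2}=2^{n}$ for every $\mathbf{y}$.

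Both directions now follow quickly. If $\brank(G)=n$, the kernel is trivial and the displayed identity collapses to $|\widehat{F}(\mathbf{y})|^{2}=2^{n}$ for every $\mathbf{y}$, so $f_G$ is bent; the ambiguous sign in $\mathbf{f}^{*}_{0} = 2^{(n-2)/2}\pm 2^{-1}$ is fixed by the constant $z\in\{0,1\}$ in the Type II readonce form produced by Theorem \ref{th:brank} together with Eqs. (\ref{cardg})--(\ref{cardf}). Conversely, if $V:=\ker A(G)$ is non-trivial, then $B|_{V\times V}\equiv 0$, so $f_G$ restricted to $V$ is $\mathbb{F}_{2}$-linear; the inner character sum $\sum_{\mathbf{w}\in V}(-1)^{f_G(\mathbf{w})+\mathbf{y}\cdot \mathbf{w}}$ therefore equals $|V|\ge 2$ whenever the linear functional $\mathbf{w}\mapsto f_G(\mathbf{w})+\mathbf{y}\cdot \mathbf{w}$ vanishes on $V$ and is $0$ otherwise, so $|\widehat{F}(\mathbf{y})|^{2}$ cannot be constant in $\mathbf{y}$ and $f_G$ is not bent.

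The main obstacle I expect is not the linear-algebra content, which is classical, but the careful bookkeeping needed to pass between the paper's $\{0,1\}$-normalised WHT (in which the bent value $\mathbf{f}^{*}_{0} = 2^{(n-2)/2}\pm 2^{-1}$ looks unusual) and the cleaner $\{-1,+1\}$-valued formulation in which the doubling identity is transparent; this translation is routine once the $\mathbf{y}=0$ coefficient is isolated so that the $\pm$ sign inherited from $(-1)^{z}$ via Theorem \ref{th:brank} matches the sign appearing in the bent definition.
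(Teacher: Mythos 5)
The paper does not actually prove this statement: it is imported verbatim from Riera and Parker \cite{Riera05I}, so there is no internal argument to measure yours against. Your blind proof is correct and self-contained. The polarization identity $f_G(\mathbf{u}+\mathbf{w})+f_G(\mathbf{u})+f_G(\mathbf{w})=\mathbf{u}^{T}A(G)\mathbf{w}$ holds because $A(G)=U_G+U_G^{T}$ and the diagonal vanishes; the doubling computation then gives $|\widehat{F}(\mathbf{y})|^{2}=2^{n}\sum_{\mathbf{w}\in\ker A(G)}(-1)^{f_G(\mathbf{w})+\mathbf{y}\cdot\mathbf{w}}$, and the dichotomy trivial/nontrivial kernel yields exactly the flat-spectrum characterization of bentness, which translates into the paper's $\{0,1\}$-normalised definition in the routine way you indicate (one small simplification: the definition already allows either sign in $\mathbf{f}^{*}_{0}=2^{(n-2)/2}\pm 2^{-1}$, so pinning the sign down via the constant $z$ is harmless but unnecessary). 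For the record, the paper's own machinery would give a shorter route to the same conclusion: by Theorem \ref{th:brank}, $\brank(G)=n$ forces the equivalent readonce form to be of Type II with $m=n$, so $f_G$ is an invertible affine change of variables away from $y_1y_2+\cdots+y_{n-1}y_n+z$, whose Walsh--Hadamard spectrum is flat of magnitude $2^{n/2}$ by a direct tensor-product computation, and bentness is affine-invariant; conversely, $m<n$ (or Type I) leaves a nontrivial kernel and kills flatness just as in your argument. Your kernel-based version has the advantage of not relying on the Ehrenfeucht--Karpinski normal form at all, which is arguably closer in spirit to the classical Dickson-theoretic statement being quoted.
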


Readonce graph states constitute a class of graph states that correspond to 
bent quadratic Boolean functions.
In particular, among all the classes with this property, it is the one with minimum size.

\subsection{Bipartite graphs}

\label{sec:bip} A graph is \emph{$(A,B)$-bipartite} when its set of vertices
is partitioned in classes $A$ and $B$, such that each vertex in $A$ is
adjacent only to vertices in $B$ and \emph{viz}. In this section, we explore
the relation between MS-number and amount of entanglement in graph states
associated with bipartite graphs. A tool used for quantifying entanglement
in bipartite quantum systems is the \emph{Schmidt rank} \cite{Hein2007}. A
bipartite pure state $|\psi \rangle _{AB}$ can always be written in terms of
its Schmidt decomposition. This is a representation of $|\psi \rangle _{AB}$
in an orthogonal product basis, $|\psi \rangle _{AB}=\sum_{i}\lambda
_{i}|i_{A}\rangle \otimes |i_{B}\rangle $, where $\lambda _{i}\geq 0$. The
number of nonzero $\lambda _{i}$'s is called \emph{Schmidt rank} of $|\psi
\rangle _{AB}$. Let us consider a graph state $|G\rangle $, represented by
an \emph{$(A,B)$-bipartite} graph $G$. In this case, the Schmidt rank of $%
|G\rangle $, with respect to the bipartition $(A,B)$, is related to the
binary rank of $G$:
\begin{equation}
SR_{A}(G)=\frac{1}{2}\brank(G).  \label{eq:srank}
\end{equation}%
For bipartite graphs, we prove that the relation between MS-number and
binary rank is stronger than Theorem \ref{th:brank}. We first need a
technical lemma:

\begin{lemma}
\label{lemma:biprank} The quadratic form corresponding to a bipartite graph
with binary rank $m$ is equivalent to the readonce form $g = y_1y_2 + \ldots
+ y_{m-1}y_m$, that is, $g$ is always of Type II.
\end{lemma}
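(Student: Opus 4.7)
The plan is to compute $|f_{G}|$ explicitly for a bipartite graph and to compare this value with what would be forced by a Type~I equivalent form. Writing the vertex set as $V=A\cup B$, the adjacency matrix of a bipartite graph has block form $\begin{pmatrix} 0 & M \\ M^{T} & 0\end{pmatrix}$ for an $|A|\times|B|$ biadjacency matrix $M$, so the quadratic form becomes $f_{G}(\mathbf{a},\mathbf{b})=\mathbf{a}^{T}M\mathbf{b}$, where $\mathbf{a}$ and $\mathbf{b}$ are the subvectors of $\mathbf{x}$ indexed by $A$ and $B$ respectively.

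The key step is the character sum $\sum_{\mathbf{a},\mathbf{b}}(-1)^{\mathbf{a}^{T}M\mathbf{b}}$. Fixing $\mathbf{b}$, the inner sum over $\mathbf{a}$ equals $2^{|A|}$ when $M\mathbf{b}=\mathbf{0}$ and vanishes otherwise; the number of $\mathbf{b}\in \ker M$ is $2^{|B|-r}$, where $r$ denotes the rank of $M$ over $\mathbb{F}_{2}$. A short computation on the block form (or a direct application of Lemma~\ref{lemma:brank}) yields $\brank(G)=2r$, and hence
$$
|f_{G}|=\tfrac{1}{2}\bigl(2^{n}-2^{n-\brank(G)/2}\bigr)=2^{n-1}-2^{n-\brank(G)/2-1}.
$$

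It then suffices to rule out that the readonce form $g$ equivalent to $f_{G}$ is of Type~I. Suppose for contradiction that it is, so that by Theorem~\ref{th:brank} it has $m=\brank(G)+1$ variables; Eqs.~(\ref{cardg}) and (\ref{cardf}) would force $|f_{G}|=2^{m-1}\cdot 2^{n-m}=2^{n-1}$. But as soon as $G$ has at least one edge, $\brank(G)\ge 2$ and the formula above gives $|f_{G}|<2^{n-1}$, a contradiction. The only remaining case $\brank(G)=0$ corresponds to $f_{G}\equiv 0$ and is excluded by the convention $m\ge 2$. Hence $g$ is of Type~II, and a further appeal to Theorem~\ref{th:brank} fixes the number of variables at $m=\brank(G)$. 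The only point requiring any verification is the identity $\brank(G)=2\,\mathrm{rank}_{\mathbb{F}_{2}}(M)$, which follows immediately from the observation that the row spaces of the two block-rows of $A(G)$ are supported on disjoint sets of columns, each of dimension $r$; I do not expect any genuine difficulty here.
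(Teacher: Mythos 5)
Your proof is correct, but it takes a genuinely different route from the paper's. The paper argues constructively: it writes every row of the biadjacency block $A(G)_{AB}$ as a combination of $m/2$ independent rows $\gamma_1,\ldots,\gamma_{m/2}$ and assembles from these an explicit nonsingular change of variables $T$ with $g(T\mathbf{x})=f_G(\mathbf{x})$, exhibiting the Type~II readonce form directly. You instead compute the weight by the exponential sum $\sum_{\mathbf{a},\mathbf{b}}(-1)^{\mathbf{a}^T M\mathbf{b}}=2^{|A|}\cdot|\ker M|=2^{n-r}$, obtain $|f_G|=2^{n-1}-2^{n-r-1}$, and then rule out Type~I by comparison with Eqs.~(\ref{cardg}) and (\ref{cardf}); all the ingredients (the block decomposition, $\brank(G)=2r$, the kernel count) are sound. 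What your approach buys is that it simultaneously proves Theorem~\ref{th:ms_bip} -- your weight formula \emph{is} $2^{n-1}(1-2^{-r})$ -- and in fact it pins down $m=\brank(G)$ without needing Theorem~\ref{th:brank} at all, since among the possible readonce weights only Type~II with $z=0$ and $m=2r$ matches. What it does not give is the explicit transformation $T$, which the paper's construction provides and which is in the spirit of the Ehrenfeucht--Karpinski algorithm. One small omission: the lemma asserts the specific form $y_1y_2+\cdots+y_{m-1}y_m$, i.e.\ $z=0$, and you only explicitly exclude Type~I; you should add the one line that Type~II with $z=1$ would give weight $2^{n-1}+2^{n-m/2-1}>2^{n-1}$, which your computed value also contradicts, so the same comparison forces $z=0$.
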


\begin{proof}
In order to prove this statement, we show how to construct a nonsingular
matrix $T$ that transforms the quadratic form $f_{G}$, associated with the
bipartite graph $G$, to its equivalent readonce form $g$ according to Eq. (%
\ref{eq:equivalentg}). Let $(A,B)$ be the bipartition of $G$ and let $p=|A|$
be the size of $A$. The adjacency matrix of $G$ can be decomposed as
\begin{equation*}
A(G)=%
\begin{pmatrix}
\mathbf{0} & A(G)_{AB} \\
A(G)_{AB}^{T} & \mathbf{0}%
\end{pmatrix}%
,
\end{equation*}%
being $A(G)_{AB}$ the matrix encoding edges between the classes. We assumed
that $\brank(G)=m$. So, the matrix $A(G)_{AB}$ has $m/2$ linearly
independent rows $\gamma _{1},\gamma _{2},\ldots \gamma _{m/2}$. We can
write the $j$th row of $A(G)_{AB}$ as a linear combination of $\gamma
_{1},\gamma _{2},\ldots \gamma _{m/2}$: $A(G)_{AB_{j}}=x_{1,j}\gamma
_{1}+\ldots +x_{m/2,j}\gamma _{m/2}$, for $1\leq j\leq p$. The matrix $T$ is
then composed of the following $m$ rows:
\begin{equation*}
\begin{array}{llccr}
T_{2i-1} & =(x_{i,1} & \ldots  & x_{i,p} & \mathbf{0}) \\
T_{2i} & =( & \mathbf{0} &  & \gamma _{i})%
\end{array}%
\qquad 1\leq i\leq m/2.
\end{equation*}%
The observation that $T$ is nonsingular proves the lemma.
\end{proof}

In the light of Lemma \ref{lemma:biprank} and Eq. (\ref{eq:srank}) we obtain
a direct relation between MS-number and Schmidt rank of bipartite graph
states:

\begin{theorem}
\label{th:ms_bip} Let $\vert G\rangle$ be a graph state represented by a $%
(A,B)$-bipartite graph $G$ of order $n$ and let $r = SR_{A}(\vert G\rangle)$
be its Schmidt rank with respect to the bipartition of $G$. Then the
MS-number of $G$ is $2^{n-1}(1-2^{-r})$.
\end{theorem}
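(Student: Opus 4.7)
The plan is to chain together the machinery already assembled in the paper: Lemma \ref{lemma:biprank} pins down the readonce form, Theorem \ref{th:brank} relates its size to the binary rank, Eq.~(\ref{eq:srank}) converts binary rank to Schmidt rank, and Eqs.~(\ref{cardg})--(\ref{cardf}) give a closed formula for the weight. The proof should therefore be little more than careful bookkeeping; I expect no real obstacle once the exponent arithmetic is laid out.

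First, I would invoke Lemma \ref{lemma:biprank} to conclude that $f_G$ is equivalent to a readonce form of Type II, namely $g = y_1 y_2 + \cdots + y_{m-1} y_m$ with $z = 0$. This is the crucial structural input, because it removes the linear term that could otherwise appear in a Type I readonce form and fixes the free parameter $z$ to $0$. By Theorem \ref{th:brank}, because $g$ is of Type II, we have $m = \brank(G)$, and then Eq.~(\ref{eq:srank}) gives $r = SR_A(|G\rangle) = \brank(G)/2 = m/2$, i.e., $m = 2r$.

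Next, I would plug these data into Eq.~(\ref{cardg}) to obtain
\begin{equation*}
|g| = 2^{m-1} - 2^{(m-2)/2},
\end{equation*}
and then into Eq.~(\ref{cardf}) to get
\begin{equation*}
w(G) = |f_G| = |g| \cdot 2^{n-m} = 2^{n-1} - 2^{n - 1 - m/2}.
\end{equation*}
Substituting $m/2 = r$ and factoring out $2^{n-1}$ yields $w(G) = 2^{n-1}(1 - 2^{-r})$, which is the claimed formula.

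The only subtlety to flag is that Lemma \ref{lemma:biprank} as stated provides Type II \emph{without} an additive constant $z$; I would state explicitly that the construction of $T$ there produces $z=0$, so the $(-1)^z$ factor in Eq.~(\ref{cardg}) is simply $+1$. Once that is noted, the remainder of the proof is a one-line computation, so I do not anticipate any genuine difficulty beyond making the exponent manipulation transparent to the reader.
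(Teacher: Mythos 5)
Your proposal is correct and follows exactly the route the paper intends: the paper derives Theorem \ref{th:ms_bip} "in the light of Lemma \ref{lemma:biprank} and Eq.~(\ref{eq:srank})," leaving the weight computation via Eqs.~(\ref{cardg})--(\ref{cardf}) implicit, and your exponent arithmetic fills in that computation correctly (including the observation that $z=0$, which the paper's lemma already builds into the stated form $g = y_1y_2 + \cdots + y_{m-1}y_m$).
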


\subsection{Edge-local complementation}

We characterize how the weight of quadratic forms changes under the
operation of edge-local complementation on bipartite graphs. The definition
of edge-local complementation is derived from the definition of local
complementation, an operation that plays a fundamental role in the context
of graph states \cite{Ji2007}. The operation of edge-local complementation
itself is directly related to the action of local Hadamard transformations
on graph states \cite{Nest2007}. The \emph{local complement} $G^{v}$ of a
graph $G=(V,E)$ at one of its vertices $v\in V(G)$ is the graph obtained by
complementing the subgraph of $G$ induced by the neighborhood $N_{v}$ of $v$
and leaving the rest of $G$ unchanged. Given a graph $G=(V,E)$ and an edge $%
\{u,v\}\in E(G)$, \emph{edge-local complementation} (or \emph{pivoting}) on $%
\{u,v\}$ transforms $G$ into $G^{(uv)}=((G^{u})^{v})^{u}=((G^{v})^{u})^{v}$.

This operation finds application also in coding theory. In particular, the
pivot orbit of a bipartite graph corresponds to the equivalence class of a
binary linear code \cite{Danielsen2007}. A graph $H$ is a \emph{pivot minor}
of $G$ if $H$ can be obtained from $G$ by vertex deletions and pivotings. We
saw in Theorem \ref{th:ms_bip} that the MS-number and binary rank of
bipartite graphs are strictly related. By our result together with Theorem
8.10.2 in \cite{agt}, we can state the next observation:

\begin{proposition}
Let $G$ be a bipartite graph and let $G^{(uv)}$ be the edge-local complement
of $G$ on the edge $(u,v)$. If $G^{\prime }$ is the pivot-minor of $G$
obtained by deleting $u$ and $v$ from $G^{(uv)}$, then $G^{\prime }$ is also
bipartite and $w(G)=2^{n-2}+2w(G^{\prime })$.
\end{proposition}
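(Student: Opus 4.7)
The plan is to combine Theorem~\ref{th:ms_bip} with the cited Theorem~8.10.2 of \cite{agt}. The role of the latter is to furnish two structural facts about the pivot-minor $G'$: that $G'$ is bipartite, with the bipartition inherited from that of $G$ after removing $u$ and $v$, and that $\brank(G') = \brank(G)-2$. I would quote these directly rather than re-derive them; the exact drop by two in the binary rank is the essential ingredient.

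Next, I would set $m = \brank(G)$ and let $r = m/2$ denote the Schmidt rank of $\ket{G}$ across its bipartition, as given by Eq.~(\ref{eq:srank}). By the previous step, $G'$ has order $n-2$, is bipartite, and has Schmidt rank $r-1$ across its inherited bipartition. Applying Theorem~\ref{th:ms_bip} to both graphs then yields
\begin{equation*}
w(G) = 2^{n-1}\bigl(1 - 2^{-r}\bigr), \qquad w(G') = 2^{n-3}\bigl(1 - 2^{-(r-1)}\bigr).
\end{equation*}
The desired identity now drops out from the elementary computation $w(G) - 2w(G') = 2^{n-1} - 2^{n-2} = 2^{n-2}$, so that $w(G) = 2^{n-2} + 2w(G')$ as claimed.

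The step I expect to require the most care is verifying what I am taking from \cite{agt}. Local complementation alone does not preserve bipartiteness in general, since complementing the subgraph induced by $N_v$ can create edges inside one side of the bipartition; nevertheless, the three-fold composition that defines pivoting, applied to an edge $\{u,v\}$ with $u \in A$ and $v \in B$, acts on the biadjacency matrix $A(G)_{AB}$ essentially as a principal pivot transform. Once the two pivot vertices are deleted, the remaining graph is again bipartite, and the $\mathbb{F}_2$-rank of its biadjacency matrix has dropped by exactly one, corresponding to a drop of exactly two in the full binary rank. That is the substance of Theorem~8.10.2 of \cite{agt}, and once it is invoked the remainder of the argument is the short arithmetic given above.
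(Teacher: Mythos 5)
Your proposal is correct and follows exactly the route the paper intends: the paper gives no written-out proof, merely asserting that the proposition follows from Theorem~\ref{th:ms_bip} together with Theorem~8.10.2 of \cite{agt}, and you have supplied precisely that argument, including the arithmetic $w(G)-2w(G')=2^{n-1}-2^{n-2}=2^{n-2}$, which checks out. Your added remark on why pivoting plus deletion of $u,v$ preserves bipartiteness (while a single local complementation would not) is a worthwhile clarification of what is being imported from \cite{agt}.
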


\section{General properties}

\label{sec:properties}In this section, we list some properties of the
MS-number for generic graphs. First of all, obviously, the MS-number of a
graph is zero if and only if the graph is empty.

\begin{proposition}
For an arbitrary graph $G=(V,E)$, we have $w(G)\geq |E|$.
\end{proposition}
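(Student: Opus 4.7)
The plan is to exhibit an explicit injection from the edge set $E$ into the collection of vertex subsets $S \subseteq V$ for which $G[S]$ has an odd number of edges. Recall from the discussion after Eq.~(\ref{eq:deff}) that $w(G)$ counts exactly these subsets, so producing $|E|$ distinct witnesses is enough.

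The natural map is $e = \{i,j\} \mapsto S_e := \{i,j\}$. For every edge $e \in E$, the induced subgraph $G[S_e]$ has precisely one edge, namely $e$ itself, because $S_e$ has only two vertices and they are adjacent in $G$. Hence the size of $G[S_e]$ is $1$, which is odd, so $S_e$ is counted by $w(G)$.

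To finish, I would observe that this map is injective: distinct edges $e \neq e'$ give distinct $2$-element subsets $S_e \neq S_{e'}$, since an edge is uniquely determined by its pair of endpoints. Therefore the $|E|$ sets $\{S_e : e \in E\}$ are $|E|$ distinct subsets of $V$ all contributing to $w(G)$, yielding $w(G) \geq |E|$.

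There is really no obstacle here; the only thing to be careful about is to phrase the argument in terms of induced subgraphs (matching the definition of $w(G)$) rather than arbitrary subgraphs, and to note that each $2$-vertex induced subgraph on an edge contributes to the odd-edge count, while $2$-vertex subsets from non-adjacent pairs contribute $0$ edges (even) and thus do not appear in $w(G)$.
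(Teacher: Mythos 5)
Your proof is correct and is essentially the paper's own argument: the paper likewise takes, for each edge, the subgraph induced by its two endpoints and notes it has odd size. You merely spell out the injectivity of the map $e \mapsto \{i,j\}$, which the paper leaves implicit.
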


\begin{proof}
For each edge let us consider the subgraph induced only by the endpoints of
the edge. This subgraph has clearly odd size.
\end{proof}

\begin{proposition}
If we exclude the path graph $P_2$, it holds that $w(G)$ is even for any $G$.
\end{proposition}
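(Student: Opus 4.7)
The plan is to compute $w(G)$ explicitly via the formulas (\ref{cardg}) and (\ref{cardf}) and track parity. First I would dispose of the edgeless case: if $G$ has no edges, then $f_G=0$ and $w(G)=0$ is even. Henceforth assume $G$ has an edge, so the equivalent readonce form $g$ satisfies $m\geq 2$. If $g$ is of Type I then $m$ must be odd and at least $3$ (since $m=1$ would force $g$ to be linear, contradicting the fact that $f_G$ is a nonzero quadratic form), so $w(G)=2^{m-1}\cdot 2^{n-m}=2^{n-1}$; since $n\geq m\geq 3$, this is even.

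The substantive case is Type II, where combining (\ref{cardg}) and (\ref{cardf}) gives $w(G)=2^{n-1}-(-1)^{z}\,2^{n-1-m/2}$. Since the two exponents $n-1$ and $n-1-m/2$ differ by $m/2\geq 1$, they are distinct, and the $2$-adic valuation of this signed sum equals the smaller exponent $n-1-m/2$. Hence $w(G)$ is odd if and only if $n-1-m/2=0$, i.e., $m=2(n-1)$. Combined with the constraints $2\leq m\leq n$, this pins down $n=m=2$ as the only way to obtain an odd value.

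Finally I would identify this forced case with $P_2$: for $n=2$ the only graph carrying a nonzero quadratic form is $P_2$ itself, whose form $f_{P_2}=x_1x_2$ is already a readonce Type II form with $m=2$ and $z=0$, giving $w(P_2)=2-1=1$. Every other graph therefore has an even MS-number. The main piece of care is the parity bookkeeping in the Type II case, which hinges on the observation that the two exponents are distinct, so the smaller term $2^{n-1-m/2}$ is the only candidate for contributing the odd value $1$.
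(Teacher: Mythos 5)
Your proof is correct and follows essentially the same route as the paper: the paper's proof simply asserts that it is "clear from Eqs.\ (\ref{cardg}) and (\ref{cardf})" that $w(G)$ is odd iff $n=2$ and $G$ has an edge, and your argument is exactly the parity bookkeeping that makes that assertion precise. The Type II valuation argument ($w(G)=2^{n-1}-(-1)^z2^{n-1-m/2}$ is odd iff $n-1-m/2=0$, which together with $2\le m\le n$ forces $n=m=2$) is the intended content.
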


\begin{proof}
It is clear from Eqs. (\ref{cardg}) and (\ref{cardf}) that $w(G) = |f_{G}|$
is odd if and only if $n = |V(G)| = 2$ and $|E(G)| > 0$, that is, if and
only if $G = P_2$.
\end{proof}

Further lower and upper bounds can be easily given:

\begin{proposition}
\label{prop:minimum} For each graph $G$ of order $n$ and size $m>0$, we have
$w(G)\geq 2^{n-2}$.
\end{proposition}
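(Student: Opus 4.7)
The plan is to exploit the Ehrenfeucht--Karpinski reduction together with the explicit weight formulas (\ref{cardg}) and (\ref{cardf}). Writing $f_G\in\mathbb{F}_2[x_1,\dots,x_n]$ for the quadratic form of $G$, let $g\in\mathbb{F}_2[y_1,\dots,y_m]$ be its equivalent readonce form, so that $w(G)=|f_G|=|g|\cdot 2^{n-m}$. The first observation is that because $G$ has at least one edge, $f_G$ is not the zero polynomial, and hence (as already remarked in the text right after Eq.~(\ref{cardf})) the readonce form must have $m\geq 2$ variables.

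Next I would do a two-case analysis on the type of $g$. If $g$ is of Type I, then $|g|=2^{m-1}$ and so $w(G)=2^{m-1}\cdot 2^{n-m}=2^{n-1}\geq 2^{n-2}$, which already beats the desired bound by a factor of two. If $g$ is of Type II, then $m$ is even and
\begin{equation*}
w(G)=\bigl(2^{m-1}-(-1)^{z}2^{(m-2)/2}\bigr)\cdot 2^{n-m}
     =2^{n-1}-(-1)^{z}\,2^{n-m/2-1}.
\end{equation*}
The extremal case is $z=0$, in which case $w(G)=2^{n-1}-2^{n-m/2-1}$. Using $m\geq 2$ (and $m$ even, so $m/2\geq 1$), we have $2^{n-m/2-1}\leq 2^{n-2}$, which rearranges to $w(G)\geq 2^{n-1}-2^{n-2}=2^{n-2}$, as required.

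There is essentially no obstacle here: once one has Eqs.~(\ref{cardg}) and (\ref{cardf}), the argument is a one-line estimate in each type, and the only thing one must be careful about is the claim $m\geq 2$, which follows from $|E(G)|>0$ together with the fact that a readonce form in one variable would have to be $y_1+z$, forcing $f_G$ to be linear (impossible for a nonzero form coming from an adjacency matrix). It is also worth noting in passing that the bound is tight, achieved by the graph $P_2\cup \overline{K}_{n-2}$ consisting of a single edge plus $n-2$ isolated vertices, for which $m=2$, $z=0$, and $w(G)=2^{n-2}$ exactly; this will be convenient for later propositions that characterize the extremal graphs.
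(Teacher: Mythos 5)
Your proof is correct and follows essentially the same route as the paper: both rest on the Ehrenfeucht--Karpinski reduction and the weight formulas (\ref{cardg}) and (\ref{cardf}). In fact your two-case analysis is more explicit than the paper's own proof, which merely asserts that the weight is minimized when the readonce form is $g=y_1y_2$ (realized by complete bipartite graphs, giving $w(K_{p,q})=2^{p+q-2}$) without spelling out the Type I / Type II estimate.
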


\begin{proof}
The weight of $f_{G}$ is minimum when its equivalent readonce form is $%
g=y_{1}y_{2}$. This is the case of complete bipartite graphs. A \emph{%
complete bipartite graph} is a special kind of $(A,B)$-bipartite graph in
which every vertex in $A$ is connected to every vertex in $B$ and \emph{viz}%
. If the two classes have cardinality $p$ and $q$, then the graph is denoted
by $K_{p,q}$. It follows that  $w(K_{p,q})=2^{p+q-2}$.
\end{proof}

\begin{proposition}
For each order $n\geq 4$, the graph $Q_{n}=(K_{4}\cup \bar{K}_{n-4})$ has
maximum MS-number. That is, given an arbitrary graph $G$ of order $n$, it
holds: $w(G)\leq w(Q_{n})$, for $n\geq 4$.
\end{proposition}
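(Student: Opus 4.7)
The plan is to apply the Ehrenfeucht-Karpinski classification, together with the weight formulas (\ref{cardg}) and (\ref{cardf}), to enumerate all possible values of $w(G)$ for graphs of order $n$, identify the maximum, and verify it is attained by $Q_{n}$.

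First I would rewrite $w(G)=|g|\,2^{n-m}$ in each of the three cases for the readonce form $g$ equivalent to $f_{G}$:
\begin{itemize}
\item Type I: $w(G)=2^{n-1}$, independent of $m$;
\item Type II with $z=0$: $w(G)=2^{n-1}-2^{n-m/2-1}$, always strictly less than $2^{n-1}$;
\item Type II with $z=1$: $w(G)=2^{n-1}+2^{n-m/2-1}$, which is a strictly decreasing function of $m$.
\end{itemize}
So the maximum over all graphs is realised by Type II with $z=1$ and the smallest admissible $m$.

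The crucial step is to show that $m=2$ is \emph{not} admissible when $z=1$. If it were, then $g(\mathbf{y})=y_{1}y_{2}+1$, so writing $\mathbf{y}=T\mathbf{x}+\mathbf{c}$ with rows $T_{1},T_{2}$ and entries $c_{1},c_{2}$,
\begin{equation*}
f_{G}(\mathbf{x})=(T_{1}\mathbf{x})(T_{2}\mathbf{x})+c_{2}T_{1}\mathbf{x}+c_{1}T_{2}\mathbf{x}+c_{1}c_{2}+1.
\end{equation*}
Since $f_{G}$ is a homogeneous quadratic form, the constant part forces $c_{1}c_{2}=1$, i.e., $c_{1}=c_{2}=1$, and then the vanishing of the linear part forces $T_{1}+T_{2}=0$, contradicting the nonsingularity of $T$. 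Therefore the minimum admissible $m$ for Type II with $z=1$ is $m=4$, giving the universal bound
\begin{equation*}
w(G)\leq 2^{n-1}+2^{n-3}=10\cdot 2^{n-4}.
\end{equation*}

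It remains to verify that $w(Q_{n})=10\cdot 2^{n-4}$. Using the identity $w(\bar{K}_{n-4}\cup K_{4})=2^{n-4}w(K_{4})$ recorded earlier in the paper, I would compute $w(K_{4})$ directly: the induced subgraphs of $K_{4}$ with odd size are exactly the $\binom{4}{2}=6$ single edges and the $\binom{4}{3}=4$ triangles, giving $w(K_{4})=10$. Since $2^{n-1}=8\cdot 2^{n-4}<10\cdot 2^{n-4}$ for $n\geq 4$, the Type I case is also dominated by this value, and the bound is tight, proving the proposition.

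The main obstacle is the $m=2$ exclusion: all the other steps are direct bookkeeping with the formulas, but this small linear-algebra check is what prevents the weight from being as large as $2^{n-1}+2^{n-2}=12\cdot 2^{n-4}$ and pins the maximum precisely to the $K_{4}$-value $10$.
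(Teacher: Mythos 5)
Your proof is correct and follows essentially the same route as the paper: both arguments read the maximum off Eqs.\ (\ref{cardg})--(\ref{cardf}) as the Type II, $z=1$ case with the smallest admissible $m$, and both identify $Q_{n}$ as attaining $2^{n-1}+2^{n-3}$. The one substantive difference is that the paper simply asserts that ``the minimum value of $m$ is $4$'' in this case, whereas you actually prove the exclusion of $m=2$; that is the right instinct, and your argument goes through, but with one small repair. Over $\mathbb{F}_{2}$ the product $(T_{1}\mathbf{x})(T_{2}\mathbf{x})$ is not purely quadratic: the diagonal terms $T_{1,i}T_{2,i}x_{i}^{2}=T_{1,i}T_{2,i}x_{i}$ are linear, so after setting $c_{1}=c_{2}=1$ the coefficient of $x_{i}$ in the linear part of $f_{G}$ is $T_{1,i}T_{2,i}+T_{1,i}+T_{2,i}$, not $T_{1,i}+T_{2,i}$. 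Its vanishing forces $T_{1,i}=T_{2,i}=0$ for every $i$, i.e.\ $T=0$, which is an even stronger contradiction with nonsingularity than $T_{1}+T_{2}=0$; the conclusion is unaffected, but as written the step ``the vanishing of the linear part forces $T_{1}+T_{2}=0$'' silently drops these diagonal contributions. The remaining bookkeeping ($w(K_{4})=10$, $w(Q_{n})=10\cdot 2^{n-4}=2^{n-1}+2^{n-3}$, and the comparison with the Type I value $2^{n-1}$) matches the paper and is correct.
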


\begin{proof}
The MS-number of the graph $Q_{n} = (K_4 \cup \bar{K}_{n-4})$ is $w(Q_{n}) =
2^{n-1}+2^{n-3}$. Looking at Eq. (\ref{cardg}), it is clear that $w(G)$ is
maximum when $g$ is of type II and $z=1$. Under these conditions, the
minimum value of $m$ is $4$. Therefore, $w(G) = 2^{n-1} (1 + 2^{-m/2})$ is
maximum when $m=4$, that is, when $w(G) = w(Q_{n})$.
\end{proof}

\section{Examples}

\label{sec:examples}In this section, we give explicit formulas for the
MS-number for some familiar classes of graphs.

\begin{proposition}
For a complete graph $K_n$ of order $n \geq 1$, it holds that
\begin{equation*}
w(K_n) = \sum_{i=0}^{\lfloor(n-1)/4\rfloor}\binom{n+1}{4i+3}.
\end{equation*}
\end{proposition}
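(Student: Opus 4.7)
The plan is to identify the induced subgraphs of $K_n$ explicitly, count those with an odd number of edges by their vertex-subset size, and then rewrite the resulting binomial sum using Pascal's identity to collapse two terms into one.

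First, I would observe that, because $K_n$ is complete, every induced subgraph on a $k$-element vertex subset is isomorphic to $K_k$ and therefore has exactly $\binom{k}{2}$ edges. Consequently,
\begin{equation*}
w(K_n) \;=\; \sum_{k=0}^{n} \bigl[\binom{k}{2}\text{ odd}\bigr]\,\binom{n}{k},
\end{equation*}
since there are $\binom{n}{k}$ vertex subsets of size $k$, each contributing one induced subgraph.

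Next I would determine exactly when $\binom{k}{2}=k(k-1)/2$ is odd. A quick residue check modulo $4$ shows that this happens precisely when $k\equiv 2$ or $k\equiv 3 \pmod 4$. Hence the sum can be reorganized as
\begin{equation*}
w(K_n) \;=\; \sum_{i\ge 0}\left[\binom{n}{4i+2}+\binom{n}{4i+3}\right],
\end{equation*}
where terms with $4i+2>n$ or $4i+3>n$ contribute zero and may be dropped or kept freely.

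Finally, I would apply Pascal's identity $\binom{n}{4i+2}+\binom{n}{4i+3}=\binom{n+1}{4i+3}$ term by term, and fix the summation range by noting that $\binom{n+1}{4i+3}=0$ as soon as $4i+3>n+1$, i.e.\ $i>(n-2)/4$; taking the upper limit to be $\lfloor (n-1)/4\rfloor$ simply allows at most one extra vanishing term, yielding the claimed expression. I do not anticipate any real obstacle here: the main content is the mod-$4$ parity analysis of $\binom{k}{2}$, and everything else is bookkeeping. A brief sanity check on small $n$ (e.g.\ $w(K_3)=\binom{4}{3}=4$, matching the explicit example earlier in the paper) would be included to confirm boundary cases, including $n=1$ where the empty sum correctly gives $w(K_1)=0$.
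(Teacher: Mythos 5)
Your proof is correct and follows essentially the same route as the paper's: both count induced subgraphs of $K_n$ by the size $k$ of the vertex subset, observe that the edge count $\binom{k}{2}$ is odd exactly when $k\equiv 2,3 \pmod 4$ (the paper phrases this as the odd-odd-even-even pattern of triangular numbers), and then combine $\binom{n}{4i+2}+\binom{n}{4i+3}$ into $\binom{n+1}{4i+3}$ via Pascal's identity. Your version is, if anything, slightly more careful than the paper's about justifying the Pascal step and the summation limits, which the paper leaves implicit.
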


\begin{proof}
We know from graph theory that any subgraph induced by a clique is a
complete subgraph and that the number of edges in a complete graph $K_{n}$
of order $n$ is equal to the $n$th triangular number $t_{n}=\sum_{i=1}^{n}i$%
. It is also easy to show that the sequence of triangular numbers goes on
according to the pattern odd-odd-even-even-odd-odd and so on. We can then
count the number of the subgraphs of $K_{n}$ with odd size and claim that
\begin{align*}
w(K_{n})& =\sum_{i=0}^{\lfloor (n-2)/4\rfloor }\left( \binom{n}{4i+2}+\binom{%
n}{4i+3}\right)  \\
& =\sum_{i=0}^{\lfloor (n-1)/4\rfloor }\binom{n+1}{4i+3}.
\end{align*}
Notice that a quadratic form corresponding to a complete graph is a 
\textit{symmetric} Boolean function, that is, 
its output depends only on the number of unit values among the input variables. 
\end{proof}

\begin{proposition}
\label{prop:path}Let $P_{n}$ be the path graph of order $n$. For $n\geq 1$,
\begin{equation*}
w(P_{n})=%
\begin{cases}
2^{n-1}-2^{(n-1)/2}, & \text{if $n$ is odd;} \\
2^{n-1}-2^{(n-2)/2}, & \text{if $n$ is even.}%
\end{cases}%
\end{equation*}
\end{proposition}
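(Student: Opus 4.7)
The plan is to reduce everything to Theorem \ref{th:ms_bip}, which already hands us a closed formula for the MS-number of any bipartite graph once we know its binary rank. Since $P_n$ is bipartite (the classes are given by the parity of the vertex index along the path), the only real work is to compute $\brank(P_n) = \text{rk}_{\mathbb{F}_2}(A(P_n))$, and then plug $r = \brank(P_n)/2$ into $w(G)=2^{n-1}(1-2^{-r})$.

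The substitution step is routine: if I can show that
\begin{equation*}
\brank(P_n) = \begin{cases} n-1, & n \text{ odd}, \\ n, & n \text{ even}, \end{cases}
\end{equation*}
then $r$ equals $(n-1)/2$ or $n/2$ respectively, and direct algebra in $2^{n-1}(1-2^{-r})$ recovers the two cases in the proposition. So the real content is the binary-rank computation.

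To compute $\brank(P_n)$, I would use a Laplace expansion on the tridiagonal matrix $A(P_n)$, which has $0$ on the diagonal and $1$'s on the super- and sub-diagonals. Expanding along the last row and then along the last column of the resulting minor gives the standard two-term recurrence
\begin{equation*}
\det A(P_n) \;=\; \det A(P_{n-2}) \pmod 2,
\end{equation*}
with base cases $\det A(P_1)=0$ and $\det A(P_2)=1$. By induction $\det A(P_n)=0$ for odd $n$ and $\det A(P_n)=1$ for even $n$, which already settles the even case: $\brank(P_n)=n$. For odd $n$, removing the first vertex exhibits $A(P_{n-1})$ as a principal submatrix of $A(P_n)$; since $n-1$ is even, this submatrix has full rank $n-1$, so $\brank(P_n) \ge n-1$, and combined with the vanishing determinant this forces $\brank(P_n)=n-1$.

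The main obstacle is really just confirming the odd case honestly: the determinant recurrence only gives the upper bound $\brank(P_n)\le n-1$, and one has to exhibit an $(n-1)\times(n-1)$ nonsingular submatrix; the principal submatrix trick above is the cleanest way and is what I would use. Once $\brank(P_n)$ is in hand, Theorem \ref{th:ms_bip} finishes the argument with no further case analysis beyond the two arithmetic simplifications for $n$ even and $n$ odd.
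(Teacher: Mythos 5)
Your proof is correct, but it takes a genuinely different route from the paper's. The paper proves Proposition \ref{prop:path} directly, by exhibiting an explicit nonsingular change of variables (e.g.\ $y_i = x_i + x_{i+2}$ for odd $i$, $y_i = x_i$ for even $i$ when $n$ is odd) that carries $f_{P_n}=\sum_j x_j x_{j+1}$ to a Type~II readonce form with $n-1$ (resp.\ $n$) variables, and then reads off the weight from Eqs.~(\ref{cardg}) and (\ref{cardf}). You instead route everything through Theorem \ref{th:ms_bip}, reducing the problem to computing $\brank(P_n)$, which you handle with the standard tridiagonal determinant recurrence $\det A(P_n) \equiv \det A(P_{n-2}) \pmod 2$ plus the principal-submatrix argument needed to pin down the rank in the odd case (a subtlety you correctly flag: the vanishing determinant alone only bounds the rank from above). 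Both arguments are sound; I checked that your rank values $n-1$ (odd) and $n$ (even) substituted into $2^{n-1}(1-2^{-r})$ with $r=\brank/2$ reproduce both cases, including the degenerate $n=1$. The paper's approach is more self-contained and hands you the explicit readonce form, which it then reuses for the cycle graph $C_n$; your approach avoids any clever substitution, leans on machinery the paper has already built, and makes the result transparently a special case of the later proposition on trees (a path has vertex covering number $\lfloor n/2\rfloor$, which equals your $r$).
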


\begin{proof}
A path graph on $n$ vertices corresponds to the function $%
f_{P_{n}}=\sum_{j=1}^{n-1}x_{j}x_{j+1}$. If $n$ is odd then $f_{P_{n}}$ is
equivalent to the readonce function with $n-1$ variables and $z=0$ under the
following transformation of variables:
\begin{equation*}
\text{for }i\leq n-1,\qquad y_{i}=%
\begin{cases}
x_{i}+x_{i+2}, & \text{if $i$ is odd;} \\
x_{i}, & \text{if $i$ is even.}%
\end{cases}%
\end{equation*}%
Since $n-1$ is even, $w(P_{n})=2(2^{(n-1)-1}-2^{[(n-1)-2]/2}
) = 2^{n-1}-2^{(n-1)/2}$. If $n$ is even then the function $f_{P_{n}}$
corresponds to the readonce function with $n$ variables and $z=0$. In this
case, the transformation of variables is
\begin{equation*}
\begin{array}{l}
y_{1}=x_{2}; \\
\text{for }2\leq i\leq n,\qquad y_{i}=%
\begin{cases}
x_{i-1}+x_{i+1}, & \text{if $i$ is odd;} \\
x_{i-1}, & \text{if $i$ is even.}%
\end{cases}%
\end{array}%
\end{equation*}%
Then, $w(P_{n})=2^{n-1}-2^{(n-2)/2}$.
\end{proof}

\begin{proposition}
Let $C_n$ be the cycle graph of order $n$. For $n \geq 2$,
\begin{equation*}
w(C_n) =
\begin{cases}
2^{n-1}, & \text{if $n$ is odd;} \\
2^{n-1}- 2^{n/2}, & \text{if $n$ is even.}%
\end{cases}%
\end{equation*}
\end{proposition}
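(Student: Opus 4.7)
The plan is to split into two cases according to the parity of $n$, exploiting the fact that $C_n$ is bipartite if and only if $n$ is even.

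For odd $n$, I would prove that $f_{C_n}$ is balanced (so $w(C_n) = 2^{n-1}$) by exhibiting a weight-flipping involution. The natural candidate is the bitwise complement $x \mapsto \bar{x}$ on $\{0,1\}^n$. Expanding over $\mathbb{F}_2$ and reading indices mod $n$,
\begin{equation*}
f_{C_n}(\bar{x}) = \sum_{i=1}^{n}(1+x_i)(1+x_{i+1}) = n + 2\sum_{i} x_i + f_{C_n}(x) = n + f_{C_n}(x),
\end{equation*}
since each variable $x_j$ appears in exactly two summands and so contributes $0$ to the linear part. For $n$ odd this becomes $f_{C_n}(\bar{x}) = 1 + f_{C_n}(x)$, so the complement map is a bijection between $f_{C_n}^{-1}(0)$ and $f_{C_n}^{-1}(1)$, yielding $w(C_n) = 2^{n-1}$.

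For even $n$, I would instead apply Theorem~\ref{th:ms_bip} to the bipartition of $C_n$ by vertex parity. The only missing input is $\brank(C_n)$. A vector $v$ lies in $\ker A(C_n)$ over $\mathbb{F}_2$ iff $v_{i-1}+v_{i+1}=0$ for every $i$, and for even $n$ this precisely forces $v$ to be constant on each of the two parity classes. Hence $\dim\ker A(C_n)=2$ and $\brank(C_n)=n-2$, so $SR_A(|C_n\rangle)=(n-2)/2$. Plugging into Theorem~\ref{th:ms_bip} gives $w(C_n) = 2^{n-1}(1 - 2^{-(n-2)/2}) = 2^{n-1}-2^{n/2}$.

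Neither step should pose a serious obstacle once the case split is made: the odd case reduces to a routine $\mathbb{F}_2$ expansion, and the even case reduces to identifying the two-dimensional kernel of the cycle's adjacency matrix. The only point requiring any care is tracking the mod-$2$ cancellation of the linear terms in the odd-case expansion, which is exactly what forces the parity of $n$ to matter.
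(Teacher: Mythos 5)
Your proof is correct, but it takes a genuinely different route from the paper's. The paper treats both parities inside the Ehrenfeucht--Karpinski framework: it asserts that $f_{C_n}$ is equivalent to a Type I readonce form in $n$ variables when $n$ is odd and to a Type II form in $n-2$ variables when $n$ is even, and then reads the weight off Eqs.~(\ref{cardg}) and (\ref{cardf}); the actual variable substitutions are only sketched by analogy with the path-graph case. You replace this with two independent arguments. In the odd case your complementation involution is a clean, self-contained balancedness proof: each variable lies in exactly two cyclic monomials, so the linear terms cancel mod $2$ and $f_{C_n}(\bar{x}) = n + f_{C_n}(x)$, which for odd $n$ swaps the fibers of $f_{C_n}$ and gives $w(C_n)=2^{n-1}$ without any normal-form machinery. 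In the even case you correctly identify $\ker A(C_n)$ as the span of the two parity-class indicator vectors, so $\brank(C_n)=n-2$, and Theorem~\ref{th:ms_bip} with $r=(n-2)/2$ yields $2^{n-1}(1-2^{-(n-2)/2})=2^{n-1}-2^{n/2}$. What your version buys is rigor where the paper is terse (no unverified substitutions) and a conceptual link to the Schmidt-rank result; what the paper's version buys is uniformity with its treatment of paths and a single mechanism for all the examples. The only shared blemish is the degenerate endpoint $n=2$, where $f_{C_2}=x_1x_2+x_2x_1$ is the zero polynomial and the statement holds trivially with $w=0$; neither argument addresses this explicitly, but it does not affect the substance.
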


\begin{proof}
A cycle graph on $n$ vertices corresponds to the quadratic form $%
f_{C_{n}}=\sum_{j=1}^{n-1}x_{j}x_{j+1}+x_{n}x_{1}$. If $n$ is odd then  $%
f_{P_{n}}$ is equivalent to the readonce function with $n$ variables and $z=0
$. Hence, $w(C_{n})=2^{(n-1)}$. If $n$ is even then the function $f_{P_{n}}$
corresponds to the readonce function with $n-2$ variables and $z=0$. Hence, $%
w(C_{n})=2^{2}(2^{n-2-1}-2^{(n-2-2)/2})=2^{n-1}-2^{n/2}$.
Transformations of variables are similar to the ones used in the proof of
Proposition \ref{prop:path}.
\end{proof}

\begin{proposition}
If $S_{n}$ is a star graph of order $n\geq 2$ then $w(S_{n})=2^{n-2}$.
\end{proposition}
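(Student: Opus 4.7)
The plan is to observe that the star graph $S_n$ is nothing but the complete bipartite graph $K_{1,n-1}$, with the center as the single vertex in one class and the $n-1$ leaves forming the other class. Once this identification is made, the result follows immediately from the formula $w(K_{p,q}) = 2^{p+q-2}$ established inside the proof of Proposition \ref{prop:minimum}, with $p=1$ and $q=n-1$. So the first step is simply to point out the equivalence $S_n \cong K_{1,n-1}$ and apply that formula.

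If I wanted a self-contained argument rather than invoking the earlier formula, I would compute directly. The quadratic form associated with $S_n$, taking vertex $1$ as the center, is
\begin{equation*}
f_{S_n}(\mathbf{x}) = x_1 x_2 + x_1 x_3 + \cdots + x_1 x_n = x_1 \sum_{i=2}^{n} x_i.
\end{equation*}
Hence $f_{S_n}(\mathbf{x})=1$ iff $x_1 = 1$ and the parity $x_2 + \cdots + x_n$ is odd. There is one choice for $x_1$, and exactly half of the $2^{n-1}$ assignments of $(x_2,\ldots,x_n)$ have odd parity, giving $2^{n-2}$ satisfying assignments, so $w(S_n) = 2^{n-2}$.

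As a third route (which I would mention as a consistency check rather than the main argument), one can apply Theorem \ref{th:ms_bip}. The adjacency matrix of $S_n$ over $\mathbb{F}_2$ has precisely two linearly independent rows (the row of the center, $(0,1,1,\ldots,1)$, and any leaf row $(1,0,\ldots,0)$), so $\brank(S_n) = 2$ and the Schmidt rank with respect to the bipartition is $r = 1$. Substituting into $w(G) = 2^{n-1}(1 - 2^{-r})$ yields $w(S_n) = 2^{n-1}(1 - 1/2) = 2^{n-2}$.

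There is no real obstacle here: the statement is the base case of the complete bipartite family and can be verified by a one-line count. The only thing to be careful about is the initial identification $S_n = K_{1,n-1}$, so the choice of $p,q$ in the $K_{p,q}$ formula produces the correct exponent $n-2$.
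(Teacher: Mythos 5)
Your proof is correct, and your second (direct-count) route is essentially the paper's own argument: the factorization $f_{S_n}=x_1(x_2+\cdots+x_n)$ is exactly the readonce substitution $y_1=x_1$, $y_2=x_2+\cdots+x_n$ that the paper uses, with your explicit parity count standing in for the appeal to Eqs.~(\ref{cardg}) and (\ref{cardf}). Your other two routes (the identification $S_n\cong K_{1,n-1}$ with the $w(K_{p,q})=2^{p+q-2}$ formula, and the Schmidt-rank computation via Theorem~\ref{th:ms_bip}) are also valid, and indeed the paper itself points out the first of these in the remark immediately following the proposition.
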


\begin{proof}
The function that represents the star graph $S_{n}$ is $f_{S_{n}}=%
\sum_{j=2}^{n}x_{1}x_{j}$. Its equivalent readonce function is $%
g_{S_{n}}=y_{1}y_{2}$ with the transformation $\{y_{1}=x_{1},y_{2}=x_{2}+%
\ldots +x_{n}\}$. It follows that $w(S_{n})=2^{n-2}$.
\end{proof}

Star graphs belong to the more general class of complete bipartite graphs
(see Proposition \ref{prop:minimum}).

A \emph{tree} is a graph without cycles. Every tree is a bipartite graph.
The Schmidt rank of a tree coincides with its vertex covering number \cite%
{Hein2007}. From Theorem \ref{th:ms_bip}, we can evaluate the MS-number of
trees:

\begin{proposition}
If $T$ is a tree with vertex covering number $\tau $ then $%
w(T)=2^{n-1}(1-2^{-\tau })$.
\end{proposition}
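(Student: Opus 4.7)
The plan is to invoke Theorem \ref{th:ms_bip} directly, since essentially everything needed is already in place. First I would observe that every tree $T$ is bipartite: a tree has no odd cycles (indeed, no cycles at all), so the standard two-coloring obtained by distances modulo $2$ from any fixed root yields a bipartition $(A,B)$ of $V(T)$. This puts $T$ squarely in the setting of Sec.~\ref{sec:bip} and makes Theorem \ref{th:ms_bip} applicable to $\ket{T}$ with respect to this bipartition.

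Next I would identify the Schmidt rank of $\ket{T}$ with the vertex covering number. The cited result from \cite{Hein2007} gives $SR_A(\ket{T}) = \tau$ for a tree $T$ with vertex covering number $\tau$; equivalently, via Eq.~(\ref{eq:srank}), one has $\brank(T) = 2\tau$. (The intuition, which I would not reproduce in detail, is that a minimum vertex cover yields a submatrix of $A(T)_{AB}$ whose rows span the row space, and conversely any set of vertices whose incident edges cover $E(T)$ must be at least as large as the rank.)

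Finally, plugging $r = \tau$ into Theorem \ref{th:ms_bip} gives
\begin{equation*}
w(T) = 2^{n-1}\bigl(1 - 2^{-\tau}\bigr),
\end{equation*}
which is exactly the claimed formula.

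There is no real obstacle here: the statement is essentially a corollary obtained by composing Theorem \ref{th:ms_bip} with the known identification of Schmidt rank and vertex covering number for trees. The only point requiring minor care is making explicit that the bipartition used to define $SR_A$ is the canonical one induced by $T$ being a tree, but since the MS-number and $\tau$ are intrinsic invariants of $T$ (not of a chosen bipartition), the choice does not affect the conclusion.
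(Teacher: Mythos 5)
Your proof is correct and is exactly the argument the paper uses: trees are bipartite, the Schmidt rank of a tree equals its vertex covering number (cited from the same reference), and the formula follows by substituting $r=\tau$ into Theorem \ref{th:ms_bip}. The extra remarks on the canonical bipartition and the rank intuition are fine but not needed beyond what the paper itself records.
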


In Fig. \ref{tablems}, we show classes of graphs with same order ($|V| \leq 9$) and MS-number. Graphs have been computed using \textsc{nauty} \cite{nauty}.

\section{Conclusions}

\label{sec:conclusion}

We have given a proof of a simple relation
between the binary rank of a graph and the weight of the corresponding
quadratic forms. For bipartite graphs, this means a connection between
Schmidt-rank and weight of quadratic forms. This led to a straightforward
characterization of the weight of functions associated with pivot-minor of
bipartite graphs. 
We believe that further combinatorial analysis on Boolean
functions can be made in order to achieve a better classification of graph
states.

\begin{acknowledgments}
A.C. is especially grateful to Anna Bernasconi 
for her guidance during the early stage of this paper.
We thank Niel de Beaudrap, Zhengfeng Ji and Matthew G. Parker for
answering with very informative letters to our questions. We are grateful to
Marc Thurley for alerting us about the existence of the Ehrenfeucht-Karpinski
algorithm.
While writing this paper S.S. was with the Institute for Quantum Computing and the Department of Combinatorics and Optimization at the University of Waterloo. 
Research at the Institute for Quantum Computing is supported by DTOARO, ORDCF, CFI, CIFAR, and
MITACS.
\end{acknowledgments}

\newpage
\begin{figure*}[htb]
 \caption{\label{tablems} Classes of graphs with the same MS-number up to isomorphism.
 The table shows the MS-number of each class and its cardinality
 (the number in parenthesis).
 We take the representatives with the smallest number of edges.
 }
\bigskip
 \centering
 \includegraphics[bb=15 247 544 759]{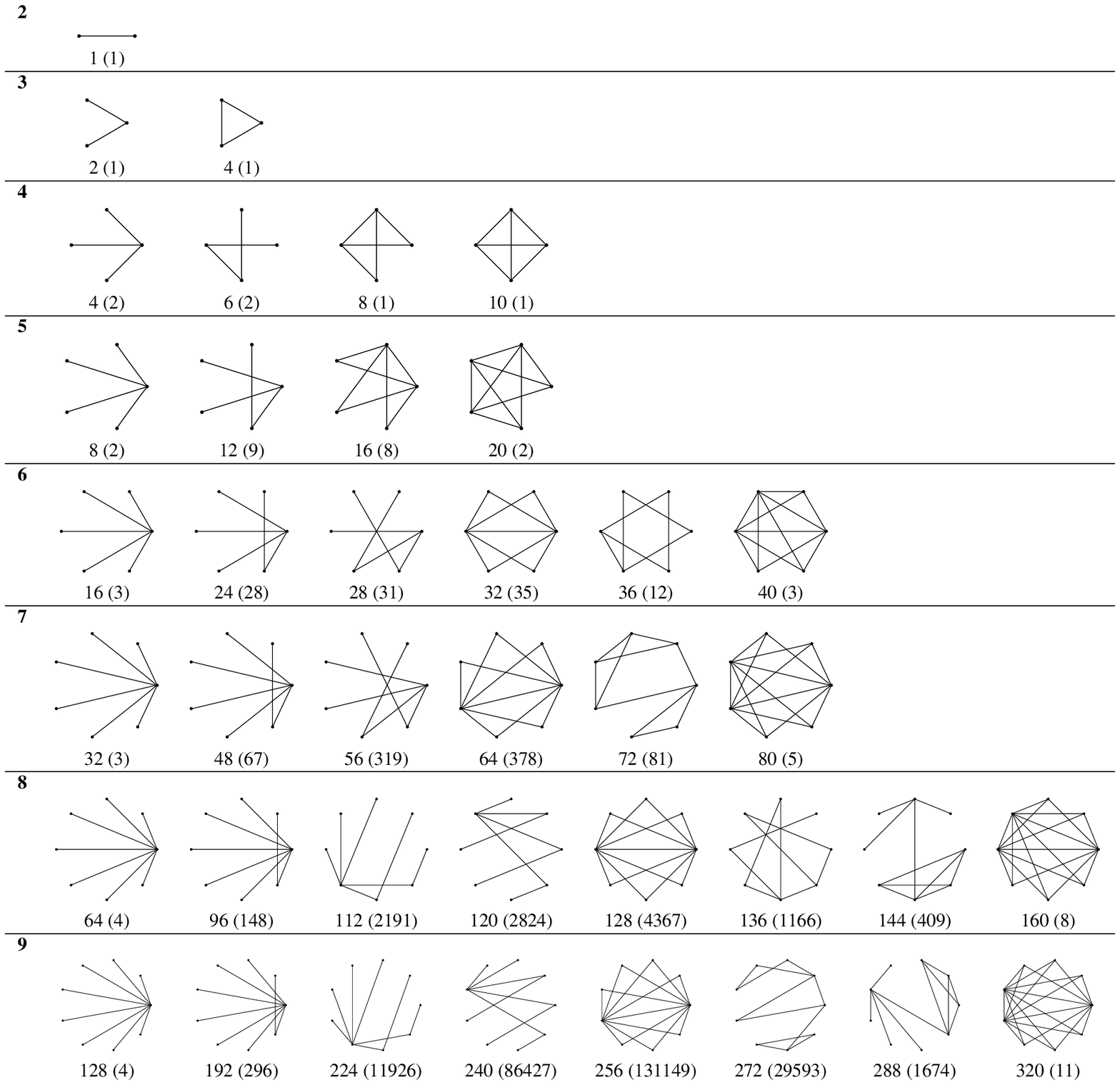}
\end{figure*}

\end{document}